\newcommand{\R}{\mathbb{R}} 
\newcommand{\Symm}{\mathbb{S}} 
\newcommand{\Ident}{I} 
\newcommand{\Zero}{0} 
\DeclareMathOperator{\diag}{diag} 
\DeclareMathOperator{\determinant}{det}
\newcommand{\Pb}{\mathbb{P}} 
\newcommand{\distr}{\sim} 
\newcommand{\Normal}{\mathcal{N}} 
\newcommand{\GP}{\mathcal{GP}} 
\newcommand{\bb}[1]{\mathbf{#1}} 
\newcommand{\bbm}[1]{\mathbf{#1}} 
\newtheorem{theorem}{Theorem}
\newtheorem{proposition}{Proposition}
\newtheorem{lemma}{Lemma}
\newtheorem{assumption}{Assumption}
\newcommand{\Ptrue}{P_\ast}
\newcommand{\Deltaset}{\mathbf{\Delta}} 
\newcommand{\RL}{\mathcal{RL}} 
\newcommand{\Lp}{\mathcal{L}} 
\newcommand{\starprod}{\star} 
\title{\LARGE \bf
Learning-enhanced robust controller synthesis with rigorous statistical and control-theoretic guarantees
}
\author{Christian Fiedler\textsuperscript{\rm 1,2}, Carsten W. Scherer\textsuperscript{\rm 3} and Sebastian Trimpe\textsuperscript{\rm 1,2}
\thanks{Funded by Deutsche Forschungsgemeinschaft (DFG, German Research Foundation) under Germany’s Excellence Strategy - EXC 2075 – 390740016 and in part by the Cyber Valley Initiative. We acknowledge the support by the Stuttgart Center for Simulation Science (SimTech).}
\thanks{\textsuperscript{\rm 1}Institute for Data Science in Mechanical Engineering, RWTH Aachen University, Aachen, Germany {\tt\small \{christian.fiedler,trimpe\}@dsme.rwth-aachen.de}}
\thanks{\textsuperscript{\rm 2}Intelligent Control Systems Group, Max Planck Institute for Intelligent Systems, Stuttgart, Germany}
\thanks{\textsuperscript{\rm 3}Department of Mathematics, University of Stuttgart, Stuttgart, Germany
        {\tt\small carsten.scherer@mathematik.uni-stuttgart.de}  
}%
}
\begin{document}

\maketitle
\thispagestyle{empty}
\pagestyle{empty}

\begin{abstract}
The combination of machine learning with control offers many opportunities, in particular for robust control. However, due to strong safety and reliability requirements in many real-world applications, providing rigorous statistical and control-theoretic guarantees is of utmost importance, yet difficult to achieve for learning-based control schemes. We present a general framework for learning-enhanced robust control that allows for systematic integration of prior engineering knowledge, is fully compatible with modern robust control and still comes with rigorous and practically meaningful guarantees. Building on the established Linear Fractional Representation and Integral Quadratic Constraints framework, we integrate Gaussian Process Regression as a learning component and state-of-the-art robust controller synthesis. 
In a concrete robust control example, our approach is demonstrated to yield improved performance with more data, while guarantees are maintained throughout.
\end{abstract}

\section{INTRODUCTION}
Many methods of modern control rely on accurate plant models \cite{dullerudpaganini}. Traditionally, models are obtained from first-principles modeling \cite{astrommurray}, harnessing extensive expert domain knowledge, or are derived experimentally using system identification \cite{ljung_sys_ident}. In most cases, a combination of these strategies is used: First-principles models are enhanced with components derived from data. 
Therefore, recent advances in machine learning offer tremendous opportunities for control. By using advanced learning approaches, it is possible to improve models from real-world data sets, even in established areas of control engineering, cf. e.g. \cite{pillonettoetal_kernel_sys_ident_review}. 
Furthermore, modern control systems are increasingly complex and learning-based approaches offer the chance to tackle this complexity.

However, control applications pose significant challenges for learning approaches, such as potential non-independence of data-samples, unmeasured states and real-time feasibility issues.
Furthermore, rigorous guarantees on the behavior of closed-loop control systems are of paramount importance in many applications, in particular, in safety-critical areas like aerospace control systems and human-robot interaction scenarios. Using model-based approaches, rigorous and practically relevant guarantees can often be given, mostly in the form of stability and constraint satisfaction guarantees. 
However, including learning-based components in control systems significantly complicates the situation, and giving theoretical guarantees on the overall system can be challenging. 
Additionally, for practical applicability of learning-based control it is important to leverage existing prior knowledge in a systematic manner. In many real-world scenarios, extensive domain and expert knowledge is available, often in the form of first-principles modeling utilizing disciplines like physics, chemistry or biology, as well as vast amounts of engineering experience and intuition. In order to make learning-based control approaches real-world feasible, reliable and data efficient, it is important to harness this prior knowledge.

In this paper we present a general methodology bringing machine learning and modern robust control closer together.
We utilize Gaussian Process Regression (GPR) \cite{rasmussen_williams_gp} together with rigorous, yet practical frequentist uncertainty bounds to learn unknown components of a system. In particular, we demonstrate how to use prior knowledge in a systematic manner in the learning process. Based on the statistical uncertainty bounds we then derive established system-theoretic uncertainty descriptions. We advocate for the use of the Linear Fractional Representation (LFR) \cite{zhougloverdoyle,scherer_theory_rc} and Integral Quadratic Constraint (IQC) \cite{megretskirantzer_iqc,veenmanetal_iqc_review}  framework to integrate the learning components with modern robust control approaches, in particular, robust controller synthesis with control-theoretic guarantees like robust stability and performance.
In this way, we can give rigorous guarantees and can harness established engineering prior knowledge, cf. Figure \ref{fig.lfr_diagram} for an illustration.
\begin{figure}[h!]
\centering
\tikzstyle{signal} = []
\tikzstyle{block} = [draw, rectangle, 
    minimum height=3em, minimum width=3em]
\begin{tikzpicture}[auto, >=latex']
    \node [block, name=input, minimum height=1cm, minimum width=1cm] (P) {$
    \begin{matrix}
	\begin{array}{c | c }
	A 		& B \\
	\hline
	C & D
	\end{array}
	\end{matrix}
    $};
    \node[signal,name=w,left=2cm of P] {$w$};
    \node[signal,name=z,right=2cm of P] {$z$};

    \node[block,below=0.4cm of P] (K) {$K$};
	 \node[matrix,draw,above=0.3cm of P] (Delta)  {
    			\node{\includegraphics[width=1cm]{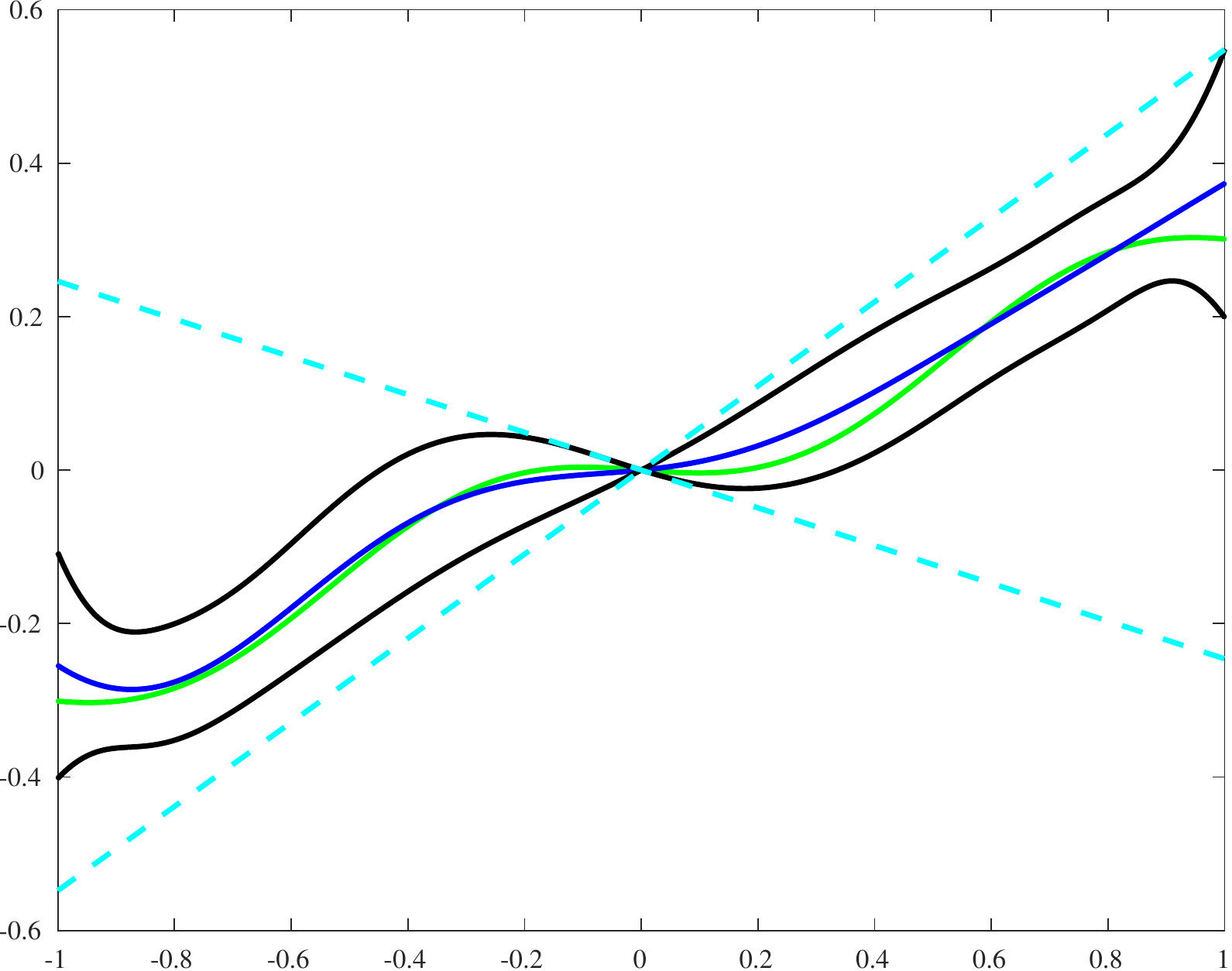}}; & ; \\
    			& \node{\includegraphics[width=1cm]{media/sector_bounds_plain}}; \\
    		}; 
  	\node[text opacity=0.4] at (Delta) {\Huge $\Delta$};
  	
    \draw [->] ([yshift=-0.2cm] P.east) -- node[pos=0.5,auto=right] {$y$} ++(1cm,0) |- (K.east);
    \draw [->] (K.west) -- ++(-1.5cm,0) |- node[pos=0.8,auto=right] {$u$} ([yshift=-0.2cm]P.west);
    \draw [->] (w) --  (P);
    \draw [->] (P) -- (z);
    \draw [->] ([yshift=0.2cm] P.east) -- node[pos=0.5,auto=left] {$q$} ++(1cm,0) |- (Delta.east);
    \draw [->] (Delta.west) -- ++(-0.5cm,0) |- node[pos=0.8,auto=left] {$p$} ([yshift=0.2cm] P.west);
    
\end{tikzpicture}

\caption{Illustration of the LFR framework in the context of learning-enhanced robust controller synthesis.
All uncertain and trouble-making parts are "pulled out" into the $\Delta$-block.
We use the framework to combine prior knowledge with components learned from data (in the $\Delta$-block) in a systematic manner.}
\label{fig.lfr_diagram}
\end{figure}

\subsection{Related work}
Combining uncertainty quantification for machine learning with robust control approaches is a common strategy in learning based control, in particular, in learning-based Model Predictive Control \cite{hewingetal_lbmpc}. 
Using frequentist uncertainty results for GPR in a robust control setting has been used for example in 
\cite{kolleretal_lbmpc,umlauftetal_clf_gp,helwaetal_rlbc_tracking,devonportetal_bayesian_learning_control}, and the concept of $\delta$-safety from \cite{kolleretal_lbmpc} is conceptually very similar to our control-theoretic guarantees, cf. Theorem \ref{thm.controller_guarantees} below.
However, due to the difficult applicability of earlier uncertainty bounds for GPR,
only heuristics for the uncertainty sets are employed and, hence, all guarantees are lost in the end.
Furthermore, often considerable prior knowledge about the system is not used, in particular, the fine structure of the uncertainties is not utilized in the control schemes.
The LFR framework has been sporadically used in the context of learning-based control, cf. e.g. 
\cite{berkenkampschoellig_lbrc,berkenkampschoellig_safe_lbc_gp}. However, its versatility and modularity has not been taken advantage of before. In particular, to the best of our knowledge it has not been used in the context of control-theoretic guarantees for learning-based control schemes.
The recent works \cite{berberichetal_robust_datadriven_acc20,berberichetal_combining_prior_data} explore the usage of the LFR framework and modern robust controller synthesis in a learning context, however, they rely on a data-driven approach  \cite{houwang_datadriven_control_survey} and do not consider nonlinearities. As such these works can be seen as complementary to this paper.

Another alternative approach to learning-based controller synthesis uses a Bayesian framework, cf. e.g. \cite{vonrohretal_probabilistic_robust_lqr}, and can also be seen as complementary to the frequentist perspective taken in this work.
Furthermore, since our learning approach leads to uncertainty sets that can be interpreted as estimates of certain system-theoretic properties, our work is also related to recent methods for inferring such properties directly from data, cf. e.g. \cite{kochetal_verifying_dissipativity_data_noise}.

\subsection{Outline}
In Section \ref{sec.background} we provide the necessary background on GPR, the Reproducing Kernel Hilbert Space (RKHS) framework and frequentist uncertainty bounds for GPR as well as the on the LFR and IQC frameworks. 
In Section \ref{sec.method} a detailed description of our approach, its assumptions as well as possible extensions are given.
We demonstrate the methodology in Section \ref{sec.example} on a concrete example from robust control.
Finally, Section \ref{sec.conclusion} provides some concluding discussions, a summary of our work and describes ongoing and future work.

\subsection{Notation} \label{sec.notation}
We indicate that a symmetric matrix $A\in \Symm^{n \times n}$ is positive (semi)-definite by $A (\succeq) \succ 0$. $\Ident$ is the identity matrix. We define $[N]:=\{1,\ldots,N\}$. 
Real-rational proper matrix functions without poles on the extended imaginary axis are denoted by $\RL^{n \times n}_\infty$, 
$\Lp_2$ is the usual Lebesgue space and $\langle \cdot, \cdot \rangle$ the corresponding inner product.
The dimension of a signal $x$ is denoted by $n_x$.
We write
$\begin{bmatrix}
	\begin{array}{c | c }
	A 		& B \\
	\hline
	C & D
	\end{array}
\end{bmatrix}$
for the transfer matrix of an LTI system and $A \starprod B$ for the usual (lower) LFT of two compatible systems.
\section{BACKGROUND} \label{sec.background}
\subsection{Gaussian Process Regression} \label{sec.gpr}
Here we recall some basics about Gaussian Process Regression (GPR). For details we refer to \cite{rasmussen_williams_gp}.
For simplicity we restrict ourselves to the scalar setting, while the generalization to the multivariate setting is straightforward.
Let $X$ be a non-empty set. A function $k: X \times X \rightarrow \R$ is called a kernel (on $X$) if it is symmetric, i.e., $\forall x,x^\prime \in X$ $k(x,x^\prime)=k(x^\prime,x)$, and positive (semi)definite, i.e., $(k(x_i,x_j))_{i,j\in[N]} \succeq 0$ for all $x_1,\ldots,x_N \in X$.
Let $k$ be a kernel and $m: X \rightarrow \R$ an arbitrary function. A Gaussian Process (GP) with mean $m$ and covariance function $k$ is a collection of random variables $(f(x))_{x \in X}$ such that
$\begin{pmatrix} f(x_1) & \cdots & f(x_N) \end{pmatrix} \distr \Normal((m(x_i))_{i \in [N]}, (k(x_i,x_j))_{i,j\in[N]})$
for all $x_1,\ldots,x_N \in X$. Intuitively, such a GP, henceforth denoted by $f \distr \GP_X(m,k)$, is just a distribution over functions.
GPR is a Bayesian nonparametric regression method: A prior (distribution) is updated using data and a likelihood model to arrive at the posterior (distribution). In the case of GPR, the prior is given by a GP $f \distr \GP_X(m,k)$, the data and likelihood model are of the form $\mathcal{D}=((x_n,y_n)_{n \in [N]}$ and $y_n=f(x_n)+\epsilon_n$, respectively, with $\epsilon_n \overset{\text{i.i.d.}}{\distr} \Normal(0, \sigma_\epsilon^2)$ and the resulting posterior is again a GP, $f\lvert_{\mathcal{D}} \distr \GP_X(m_\mathcal{D}, k_\mathcal{D})$.
Without loss of generality we assume in the following that $m \equiv 0$, and in this case we have 
$m_\mathcal{D}(x)=\bb{k}_\mathcal{D}(x)^\top(\bbm{K}_{\mathcal{D}}+\sigma_\epsilon^2\Ident_N)^{-1}\bb{y}$ 
and 
$k_{\mathcal{D}}(x,x^\prime)=k(x,x^\prime)-\bb{k}_\mathcal{D}(x)^\top(\bbm{K}_{\mathcal{D}}+\sigma_\epsilon^2\Ident_N)^{-1}\bb{k}_\mathcal{D}(x^\prime)$ for $x,x^\prime \in X$, where we defined $\bb{k}_\mathcal{D}(x):=(k(x,x_n))_{n \in [N]}$ and the kernel matrix $\bbm{K}_{\mathcal{D}}:=(k(x_i,x_j))_{i,j\in [N]}$.
In practice the GPR model contains unspecified parameters, called hyperparameters, that also have to be learned from data. Usually these are estimated with likelihood optimization, for details we refer again to \cite{rasmussen_williams_gp}.
Furthermore, due to the inversion of an $N \times N$-matrix (where $N$ is the number of samples in the data set), approximate methods like sparse GPs have to be utilized in practice when dealing with large data sets, cf. e.g. \cite{baueretal_understanding_sparse_gp}.
Finally, it is possible to systematically include prior knowledge into GPR by choosing an appropriate covariance function \cite{rasmussen_williams_gp, duvenaud_automatic_model_construction}. 
\subsection{Frequentist uncertainty bounds for GPR}
In the robust control framework a fixed ground truth is assumed and we want an uncertainty set that contains the ground truth with certainty. In the stochastic setting of machine learning this requirement is typically relaxed: The uncertainty set should contain the ground truth with (very high) prespecified probability, the latter with respect to the data generating distribution. This is commonly referred to as the frequentist setting \cite{murphy_ml}.
The posterior variance from GPR can be used as a measure of the remaining uncertainty. Unfortunately, since GPR is a Bayesian method, the posterior distribution cannot directly be used to arrive at a frequentist uncertainty set \cite{fiedleretal_gp_bounds}. However, using the framework of Reproducing Kernel Hilbert Spaces (RKHS), it is possible to provide a link, as first demonstrated by \cite{srinivasetal_10}. 
Let $X$ be a non-empty set and $H \subseteq \R^X$ a Hilbert space. A function $k: X \times X \rightarrow \R$ is called a reproducing kernel (for $H$) if $\forall x \in X$ $x^\prime \mapsto k(\cdot,x) \in H$ and $\forall x \in X$, $f \in H$ $f(x)=\langle f, k(\cdot, x) \rangle$. A reproducing kernel is necessarily a kernel and every kernel $k$ has a unique RKHS  \cite{steinwartchristmann_svm_book}, denoted in the following by $H_k$, with associated norm $\|\cdot\|_k$.
If the ground truth is contained in the RKHS belonging to the covariance function used in GPR, we can derive rigorous uncertainty sets. The following result is a variant of \cite[Theorem~1]{fiedleretal_gp_bounds}, which in turn is based on \cite{cg17}.
\begin{proposition} \label{prop.uncertainty_bound}
Let $X$ be a non-empty set and $k: X \times X \rightarrow \R$ a kernel with RKHS $H_k$. Let $f \in H_k$ with $\|f\|_k \leq B$, $x_1,\ldots,x_N \in X$ and $y_n=f(x_n)+\epsilon_n$ for $n \in [N]$, where $\epsilon_1,\ldots,\epsilon_N$ are independent $R$-subgaussian random variables. Denote by $\mu_{\mathcal{D}}, \sigma_{\mathcal{D}}^2$ the posterior mean and variance, respectively, of GPR with prior $\GP_X(0,k)$, variance $\lambda>0$ in the likelihood model and using the data set $((x_n,y_n))_{n \in [N]}$. Let $\delta \in (0,1)$ and define
$\beta_{\mathcal{D}} := B + 2R\sqrt{\log\determinant(\bbm{K}_{\mathcal{D}}+\bar{\lambda}\Ident_N)-2\log(\delta)}$, where $\bar{\lambda}=\max\{1,\lambda\}$. Then
\begin{equation*}
\Pb[|f(x)-\mu_{\mathcal{D}}(x)| \leq \beta_{\mathcal{D}} \sigma_{\mathcal{D}}(x) \: \forall x \in X] \geq 1-\delta.
\end{equation*}
\end{proposition}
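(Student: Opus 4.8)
The plan is to leave the finite-dimensional kernel-matrix picture and work directly in the RKHS $H_k$, where GPR coincides with kernel ridge regression and the self-normalized concentration machinery underlying \cite{fiedleretal_gp_bounds,cg17} applies. First I would introduce the feature map $\phi(x):=k(\cdot,x)\in H_k$, so that $f(x)=\langle f,\phi(x)\rangle$ and $k(x,x^\prime)=\langle\phi(x),\phi(x^\prime)\rangle$, together with the regularized covariance operator $V_{\mathcal{D}}:=\lambda\,\Ident+\sum_{n\in[N]}\phi(x_n)\phi(x_n)^\ast$ on $H_k$ (the $\phi(x_n)\phi(x_n)^\ast$ being rank-one outer products). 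Using the push-through (matrix inversion) identity one checks the two standard operator-theoretic reformulations of the GPR quantities: the posterior mean is the ridge estimator evaluated at $x$, $\mu_{\mathcal{D}}(x)=\langle\phi(x),\,V_{\mathcal{D}}^{-1}\sum_{n\in[N]}\phi(x_n)y_n\rangle$, and the posterior standard deviation is the $V_{\mathcal{D}}^{-1}$-weighted norm of the feature, $\sigma_{\mathcal{D}}(x)=\sqrt{\lambda}\,\|\phi(x)\|_{V_{\mathcal{D}}^{-1}}$, where $\|u\|_{V_{\mathcal{D}}^{-1}}^2:=\langle u,V_{\mathcal{D}}^{-1}u\rangle$.

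Next I would substitute $y_n=\langle f,\phi(x_n)\rangle+\epsilon_n$ into the ridge estimator and split the prediction error into a deterministic regularization term and a stochastic noise term,
\[
 f(x)-\mu_{\mathcal{D}}(x)=\lambda\,\langle\phi(x),V_{\mathcal{D}}^{-1}f\rangle-\Big\langle\phi(x),\,V_{\mathcal{D}}^{-1}\textstyle\sum_{n\in[N]}\phi(x_n)\epsilon_n\Big\rangle.
\]
Applying Cauchy--Schwarz in the $V_{\mathcal{D}}^{-1}$-inner product to each summand factors out $\|\phi(x)\|_{V_{\mathcal{D}}^{-1}}=\sigma_{\mathcal{D}}(x)/\sqrt{\lambda}$. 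The first term is controlled deterministically: since $V_{\mathcal{D}}\succeq\lambda\,\Ident$ we have $\|f\|_{V_{\mathcal{D}}^{-1}}\leq\lambda^{-1/2}\|f\|_k\leq\lambda^{-1/2}B$, which yields the contribution $B\,\sigma_{\mathcal{D}}(x)$ and thus explains the leading $B$ in $\beta_{\mathcal{D}}$. Note that this estimate already holds uniformly in $x$.

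The crux is the stochastic term, which after Cauchy--Schwarz is governed by the self-normalized quantity $\|\sum_{n\in[N]}\phi(x_n)\epsilon_n\|_{V_{\mathcal{D}}^{-1}}$ --- crucially \emph{independent of} $x$, which is exactly what will let the final bound hold for all $x\in X$ simultaneously rather than only pointwise. Here I would invoke the self-normalized concentration inequality for Hilbert-space-valued martingales (the RKHS form behind \cite[Theorem~1]{fiedleretal_gp_bounds} and \cite{cg17}, established by the method of mixtures / pseudo-maximization), which uses only the $R$-subgaussianity of the $\epsilon_n$ and bounds this quantity, with probability at least $1-\delta$, by $R\sqrt{\log\determinant(V_{\mathcal{D}}/\lambda)-2\log\delta}$. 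I expect this inequality to be the main obstacle, since its proof requires constructing the appropriate supermartingale and carrying out the pseudo-maximization over the (infinite-dimensional) RKHS; the remaining steps are essentially bookkeeping.

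Finally I would convert the operator log-determinant into the finite matrix expression of the statement. By Sylvester's determinant identity, $\determinant(\Ident+\lambda^{-1}\sum_{n\in[N]}\phi(x_n)\phi(x_n)^\ast)=\determinant(\Ident_N+\lambda^{-1}\bbm{K}_{\mathcal{D}})$, so the infinite-dimensional determinant collapses to a computable $N\times N$ determinant. Absorbing the $\lambda$-dependent scaling into the log-determinant via $\bar{\lambda}=\max\{1,\lambda\}$ (which in particular keeps the estimate clean in the regime $\lambda<1$) and tracking constants reproduces exactly the $2R\sqrt{\log\determinant(\bbm{K}_{\mathcal{D}}+\bar{\lambda}\Ident_N)-2\log\delta}$ term of $\beta_{\mathcal{D}}$. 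Combining the deterministic $B\,\sigma_{\mathcal{D}}(x)$ contribution with this stochastic contribution, factoring out $\sigma_{\mathcal{D}}(x)$, and using that the random bound is uniform in $x$ then gives $|f(x)-\mu_{\mathcal{D}}(x)|\leq\beta_{\mathcal{D}}\,\sigma_{\mathcal{D}}(x)$ for all $x\in X$ on an event of probability at least $1-\delta$.
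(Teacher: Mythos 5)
First, a point of reference: the paper does not actually prove Proposition~\ref{prop.uncertainty_bound}; it is stated as a variant of \cite[Theorem~1]{fiedleretal_gp_bounds}, which in turn rests on \cite{cg17}, so your attempt has to be judged against those proofs. Your skeleton matches them and is correct as far as it goes: the operator reformulation $\mu_{\mathcal{D}}(x)=\langle k(\cdot,x),V_{\mathcal{D}}^{-1}\sum_n k(\cdot,x_n)y_n\rangle$, the identity $\sigma_{\mathcal{D}}(x)=\sqrt{\lambda}\,\|k(\cdot,x)\|_{V_{\mathcal{D}}^{-1}}$, the bias/noise split, the deterministic bound $B\,\sigma_{\mathcal{D}}(x)$, and the Sylvester collapse of the operator determinant are all standard and right.

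The first gap is the step you yourself flag as the main obstacle. The inequality $\|\sum_n k(\cdot,x_n)\epsilon_n\|_{V_{\mathcal{D}}^{-1}}\le R\sqrt{\log\det(V_{\mathcal{D}}/\lambda)-2\log\delta}$ is \emph{not} ``the RKHS form behind'' \cite{cg17}: the Abbasi--Yadkori method of mixtures requires an isotropic Gaussian mixing measure, which does not exist on an infinite-dimensional $H_k$, and this is precisely why Theorem~1 of \cite{cg17} comes with a modified self-normalization and a determinant of the form $\det((1+\eta)\Ident_N+\bbm{K}_{\mathcal{D}})$. The $\bar{\lambda}=\max\{1,\lambda\}$ and the factor $2$ in $\beta_{\mathcal{D}}$ originate from those corrections and the ensuing comparison arguments in \cite{fiedleretal_gp_bounds}, not from constant tracking applied to your cleaner bound. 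In the present setting (fixed inputs, independent noise) your inequality \emph{is} provable, but by a different argument you would need to spell out: $\|\sum_n k(\cdot,x_n)\epsilon_n\|^2_{V_{\mathcal{D}}^{-1}}=\epsilon^\top\bbm{K}_{\mathcal{D}}(\bbm{K}_{\mathcal{D}}+\lambda\Ident_N)^{-1}\epsilon$ is a finite-dimensional quadratic form, and one can mix over $\R^N$ with the measure $\Normal(0,\lambda^{-1}\bbm{K}_{\mathcal{D}})$; the infinite-dimensional pseudo-maximization you invoke is exactly what fails.

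The second gap is fatal as written: the final ``absorption'' step cannot work when $\lambda<1$. Your intermediate bound carries $\frac{R}{\sqrt{\lambda}}\sqrt{\log\det(\Ident_N+\lambda^{-1}\bbm{K}_{\mathcal{D}})-2\log\delta}$, which diverges as $\lambda\to0$, whereas the target $2R\sqrt{\log\det(\bbm{K}_{\mathcal{D}}+\Ident_N)-2\log\delta}$ is bounded, so the former cannot be dominated by the latter by any choice of manipulations. (For $\lambda\ge1$ your route does close, even with constant $1$ instead of $2$: use $1/\sqrt{\lambda}\le1$ together with $\det(\Ident_N+\lambda^{-1}\bbm{K}_{\mathcal{D}})=\det(\bbm{K}_{\mathcal{D}}+\lambda\Ident_N)/\lambda^N\le\det(\bbm{K}_{\mathcal{D}}+\bar{\lambda}\Ident_N)$.) Moreover, the obstruction for small $\lambda$ is not an artifact of your proof: if $\bbm{K}_{\mathcal{D}}$ is nonsingular and the noise is Gaussian, then as $\lambda\to0$ one has $\mu_{\mathcal{D}}(x_i)\to y_i$ and $\sigma_{\mathcal{D}}(x_i)=O(\sqrt{\lambda})$, so $|f(x_i)-\mu_{\mathcal{D}}(x_i)|\to|\epsilon_i|>0$ while $\beta_{\mathcal{D}}\,\sigma_{\mathcal{D}}(x_i)\to0$. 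Hence a $\beta_{\mathcal{D}}$ that is independent of $\lambda$ on the range $\lambda<1$ cannot suffice, and any correct version of the bound must retain a factor of order $\sqrt{\bar{\lambda}/\lambda}$ on the noise term --- a factor your sketch silently discards (and which the proposition as printed appears to drop as well; the claim is unproblematic only for $\lambda\ge1$).
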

Note that once a reasonable upper bound $B$ on $\|f\|_k$ is available, the resulting uncertainty set is explicitly computable. 
Numerical experiments indicate that Proposition \ref{prop.uncertainty_bound} it is rather tight. Similar to previous works \cite{srinivasetal_10,cg17}, Proposition \ref{prop.uncertainty_bound} requires the usage of the same kernel as reproducing kernel and covariance function in GPR.
First steps towards results robust to kernel misspecifications have been taken in \cite{fiedleretal_gp_bounds}. However, for simplicity we do not discuss these issues in the following.
Furthermore, we would like to emphasize that no additional assumptions on the kernel are necessary in Proposition \ref{prop.uncertainty_bound}. In particular, a lot of flexibility is provided to include prior knowledge via the kernel and systematic approaches like \cite{jidlingetal_linearly_constrained_gp,geisttrimpe_gpgp} could be used.

\subsection{Modeling uncertainty using IQCs and LFR} \label{sec.lfr_iqc}
In real-world applications, only parts of the plant are unknown and even the remaining uncertainties often carry a lot of structure.
It is therefore important to be able to systematically combine known and unknown components of dynamical systems and to leverage sophisticated descriptions of the fine properties of uncertainties. An established framework in modern robust control, that is ideally suited for this task are Linear Fractional Representations (LFR). 
Consider the partially known system (using operator notation)
\begin{equation} \label{eq.true_sys}
\begin{pmatrix}
z \\ y
\end{pmatrix}
=
\Ptrue
\begin{pmatrix}
w \\ u
\end{pmatrix},
\end{equation}
with control input $u$, measured output $y$, generalized disturbance $w$ and controlled output $z$. In the LFR framework we model the known parts by a nominal LTI system
\begin{equation}
\begin{pmatrix} \label{eq.generalized_plant}
q \\
z \\
y
\end{pmatrix}
=
\begin{pmatrix}
	G_{qp} & G_{qw} & G_{qu} \\
	G_{zp} & G_{zw} & G_{zu} \\
	G_{yp} & G_{yw} & G_{zu}
\end{pmatrix}
\begin{pmatrix}
p \\
w \\
u
\end{pmatrix}
\end{equation}
in feedback connection with an uncertain system
\begin{equation} \label{eq.uncertain_sys}
p = \Delta(q),
\end{equation}
where $\Delta \in \Deltaset$, with the latter being the class of given uncertainties.
In many cases substantial additional information about $\Delta$ is known, in particular, if machine learning methods are employed to reduce the epistemic uncertainty about the system. Here we propose to use the versatile and powerful framework of Integral Quadratic Constraints (IQCs) for this task. 
We say that an uncertainty $\Delta$ fulfills the IQC defined by the multiplier $\Pi \in \RL^{(n_q+n_p)\times(n_q+n_p)}_\infty$
if 
\begin{equation} \label{eq.iqc_def}
\left\langle
\begin{pmatrix}
q \\
\Delta(q)
\end{pmatrix},
\Pi
\begin{pmatrix}
q \\
\Delta(q)
\end{pmatrix}
\right\rangle
\geq 0
\qquad \forall q \in \Lp_2.
\end{equation}
For more details and a very comprehensive collection of multipliers we refer to \cite{veenmanetal_iqc_review}.
Furthermore, IQCs can be used for robust stability and performance analysis:
Consider the system
\begin{equation} \label{eq.uncertain_cl}
\begin{pmatrix}
q \\
z
\end{pmatrix}
=
\begin{pmatrix}
\mathcal{G}_{qp} & \mathcal{G}_{qw} \\
\mathcal{G}_{zp} & \mathcal{G}_{zw}
\end{pmatrix}
\begin{pmatrix}
p \\
w
\end{pmatrix}
\end{equation}
resulting from connecting system \eqref{eq.generalized_plant} with a given LTI controller $K$.
If an IQC description of the uncertainty is available, the well-known IQC stability theorem from \cite{megretskirantzer_iqc} allows to conclude robust stability and even robust performance, cf. \cite{veenmanetal_iqc_review} for details.
Using the KYP Lemma
leads to an LMI problem and makes the robust stability and performance analysis with IQCs computationally tractable, see again \cite{veenmanetal_iqc_review} for details. 

\section{METHODOLOGY} \label{sec.method}
\subsection{Problem setting}
We propose to utilize the LFR framework described in Section \ref{sec.lfr_iqc} since it is an established and flexible framework for uncertain system modeling \cite{zhougloverdoyle,scherer_theory_rc}. In particular, it is well suited to integrate machine learning components into modern robust control: extensive prior system knowledge can be included into the nominal plant and all learning components are "pulled out" into the uncertain part. By using machine learning methods incorporating prior knowledge, it is possible to arrive at very data-efficient learning-enhanced control schemes. Furthermore, as we will demonstrate in the following, the LFR framework allows us to easily transfer statistical bounds into control-theoretic guarantees in a structured fashion.

For the rest of this section, we are concerned with an unknown plant \eqref{eq.true_sys} given in the LFR form \eqref{eq.generalized_plant}, \eqref{eq.uncertain_sys}. 
The overall goal is to perform robust controller synthesis and to give robust stability and performance guarantees for the synthesized controller. For sake of concreteness, we assume a single objective optimization where possible performance weighting filters have been already included in the generalized plant $P$ and we are left with a standard robust (against the uncertainty in \eqref{eq.uncertain_sys}) $H_\infty$ synthesis problem. 

Assume now that we are not satisfied with the achievable performance of the controller due to the uncertainty being too large. 
For this purpose, we will employ machine learning in order to reduce the epistemic uncertainty contained in \eqref{eq.uncertain_sys}, using a data set $\mathcal{D}$.
Since the LFR framework is very modular, it causes no limitation to focus our attention in this paper to a single uncertain component that is to be learned. 
We would like to stress that it is easily possible to deal with multiple uncertainties containing learned components, even using different learning methods, and to combine these  with uncertainties that classically emerge if capturing parametric model variations or unmodeled dynamics as resulting from a complexity reduction step, cf. \cite{dullerudpaganini}.

Instead of describing an abstract general procedure, we opt to present the concrete case of a static (partially) unknown diagonal nonlinearity,
\begin{equation} \label{eq.nl}
p=\phi(q),
\end{equation}
with $n_p=n_q$ and $p_i(t)=\phi_i(q_i(t))$ for $i\in[n_p]$, $t \geq 0$. We will comment on possible generalizations and resulting complications during the description of our approach.

\subsection{Learning static nonlinearities} \label{sec.learning_static_nl}
Since the described procedure can be applied to all $\phi_i$, $i \in [n_p]$ separately, we focus on a single scalar nonlinearity $\phi: \R \rightarrow \R$ in this and the next section.
The corresponding data set $\mathcal{D}=((x_n,y_n))_{n \in [N]}$ is of the form
\begin{equation} \label{eq.data_model}
y_n = \phi(x_n) + \epsilon_n,
\end{equation}
where $\epsilon_1,\ldots,\epsilon_N$ is additive measurement noise. 
This means that the unknown part can be isolated and input-output samples can be collected.
For concreteness, we make the following assumption on the noise.
\begin{assumption} \label{assumption.noise}
$\epsilon_1,\ldots,\epsilon_N$ are independent, $R$-subgaussian random variables with $R \geq 0$ the subgaussianity constant.
\end{assumption}

We propose to use GPR to learn from $\mathcal{D}$ and to use Proposition \ref{prop.uncertainty_bound} to arrive at a high probability uncertainty set. 
For the latter, we make the following assumption, as is commonly done in learning-based control using GPs \cite{kolleretal_lbmpc}.
\begin{assumption} \label{assumption.rkhs}
$k$ is a kernel on $\R$ and $\phi \in H_k$ with $\|\phi\|_k \leq B$.
\end{assumption}
This leads immediately to the following Lemma.
\begin{lemma} \label{lemma.uncertainty_set_gpr}
If applying GPR with covariance function $k$ and nominal noise level $\lambda > 0$ to data set $\mathcal{D}$, we get with the notation from Proposition \ref{prop.uncertainty_bound}, for any $\delta \in (0,1)$ under Assumption \ref{assumption.noise} and \ref{assumption.rkhs} that
\begin{equation*}
\Pb[|\phi(x)-\mu_{\mathcal{D}}(x)| \leq \beta_{\mathcal{D}} \sigma_{\mathcal{D}}(x) \: \forall x \in \R] \geq 1-\delta.
\end{equation*}
\end{lemma}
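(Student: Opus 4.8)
The plan is to recognize that Lemma \ref{lemma.uncertainty_set_gpr} is nothing more than the concrete specialization of Proposition \ref{prop.uncertainty_bound} to the scalar learning problem at hand, so the entire argument reduces to verifying that each hypothesis of the proposition is met by the present assumptions. First I would fix the index set $X = \R$ and take $f = \phi$ as the ground-truth function to be estimated. The data model \eqref{eq.data_model}, namely $y_n = \phi(x_n) + \epsilon_n$, is exactly of the form $y_n = f(x_n) + \epsilon_n$ required by the proposition, with inputs $x_1, \dots, x_N \in \R$ drawn from the data set $\mathcal{D}$.

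Next I would check the two structural requirements in turn. Assumption \ref{assumption.rkhs} guarantees that $k$ is a kernel on $\R$ and that $\phi \in H_k$ with $\|\phi\|_k \le B$, which is precisely the hypothesis $f \in H_k$ with $\|f\|_k \le B$. Assumption \ref{assumption.noise} guarantees that $\epsilon_1, \dots, \epsilon_N$ are independent $R$-subgaussian random variables, matching the noise hypothesis verbatim. Since GPR is run with covariance function $k$ and likelihood variance $\lambda > 0$, its posterior mean and variance coincide with the objects $\mu_{\mathcal{D}}$ and $\sigma_{\mathcal{D}}^2$ appearing in the proposition, and the confidence scaling $\beta_{\mathcal{D}}$ is literally the same expression. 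With all hypotheses satisfied, Proposition \ref{prop.uncertainty_bound} applies directly and delivers the claimed high-probability bound for $f = \phi$ over $X = \R$.

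I do not expect any genuine obstacle here, since the content is carried entirely by Proposition \ref{prop.uncertainty_bound}, which is assumed. The only point worth an explicit sentence is that the proposition is stated for an \emph{abstract} non-empty index set $X$ with no additional structure, so one must merely confirm that the scalar choice $X = \R$ together with the one-dimensional kernel from Assumption \ref{assumption.rkhs} is admissible; this is immediate, as $\R$ is non-empty and the proposition imposes no further conditions on $X$. Consequently the proof is a one-line invocation of the proposition once the dictionary $X = \R$, $f = \phi$ is fixed.
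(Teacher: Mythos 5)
Your proposal is correct and matches the paper's own reasoning exactly: the paper presents this lemma as following immediately from Proposition \ref{prop.uncertainty_bound} once Assumptions \ref{assumption.noise} and \ref{assumption.rkhs} supply the noise and RKHS hypotheses, which is precisely your dictionary $X=\R$, $f=\phi$. Nothing is missing.
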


This is illustrated in Figure \ref{fig.sector_bounds} with the example from Section \ref{sec.example}.
Additional prior knowledge on the static nonlinearity can be systematically included in GPR via the covariance function. 

As a concrete example, suppose we know that $\phi(0)=0$, which is a requirement for a sector bounded nonlinearity. Given any kernel $k$ with $k(0,0)\not = 0$, following the procedure described in \cite{jidlingetal_linearly_constrained_gp} leads to the new kernel
\begin{equation} \label{eq.kernel0}
k_0(x,x^\prime) = k(x,x^\prime) - \frac{1}{k(0,0)}k(x,0)k(x^\prime,0).
\end{equation}
Note that in this particular example, the same effect can be achieved conditioning a GP prior on the noise free virtual data point $(0,0)$.
All functions contained in $H_{k_0}$ as well as all samples from $\GP(0,k_0)$ are guaranteed to be zero in zero. Many other properties of functions can be encoded in this way, cf. e.g. \cite{jidlingetal_linearly_constrained_gp,geisttrimpe_gpgp}.

\noindent
\emph{Remarks} 
\begin{enumerate}
\item It is possible to deal with multivariate static nonlinearities with multi-output GPs, cf. e.g. 
	\cite{rasmussen_williams_gp}.
\item The independence assumption on the noise is reasonable in the setting used here, since we directly collect input-output samples. Note that Proposition \ref{prop.uncertainty_bound} is based on the powerful concentration inequality \cite[Theorem~1]{cg17} which supports also much more general noise distributions. In particular, we expect that results like Proposition \ref{prop.uncertainty_bound} can be generalized to the closed-loop setting with process noise.
\item  Other machine learning methods can be used for this step, as long as numerical uncertainty bounds can be derived. For example, if we have magnitude bounds on the additive noise in \eqref{eq.data_model} and no distributional assumptions, then the kernel methods and accompanying uncertainty bounds in \cite{maddalenaetal_deterministic_uncertainty_bounds} or the Nonlinear Set Membership framework \cite{milanesenovara_nsm} could as well be employed.
\end{enumerate}
\subsection{Connection to robust control: High probability sector bounds} \label{sec.sector_bounds_from_gpr}
As common in modern robust control, the uncertainty set has to be transformed into a form that is amenable to controller synthesis \cite{dullerudpaganini}. As a concrete example, we illustrate this by deriving the tightest sector bounds compatible with the high probability uncertainty set from Section \ref{sec.learning_static_nl}. Recall that $\varphi: \R \rightarrow \R$ belongs to a sector $[\kappa_1,\kappa_2]$ if
\begin{equation} \label{eq.sector_condition}
\kappa_1 x^2 \leq x \varphi(x) \leq \kappa_2 x^2 \qquad \forall x \in \R.
\end{equation}
Generalizations to the multivariate settings are straightforward, cf. e.g. \cite{khalil_nls}.
Lemma \ref{lemma.uncertainty_set_gpr} leads immediately to the next result.
\begin{lemma} \label{lemma.sector_bounds_from_gpr}
Consider the situation of Lemma \ref{lemma.uncertainty_set_gpr} and assume that $\phi$ belongs to the sector $[\kappa_1,\kappa_2]$ with $\kappa_1 \leq \kappa_2$. If defining 
{ \scriptsize
\begin{align*}
\hat{\kappa}_1 & := \min_{x \in [a,b]\setminus \{ 0\}} \frac{\min\{x \cdot (\mu_{\mathcal{D}}(x) - \beta_{\mathcal{D}} \sigma_{\mathcal{D}}(x)), x \cdot (\mu_{\mathcal{D}}(x) + \beta_{\mathcal{D}} \sigma_{\mathcal{D}}(x)) \}}{x^2} \\
\hat{\kappa}_2 & := \max_{x \in [a,b]\setminus \{ 0\}} \frac{\max\{x \cdot (\mu_{\mathcal{D}}(x) - \beta_{\mathcal{D}} \sigma_{\mathcal{D}}(x)), x \cdot (\mu_{\mathcal{D}}(x) + \beta_{\mathcal{D}} \sigma_{\mathcal{D}}(x)) \}}{x^2}, 
\end{align*}  }
then $[\hat{\kappa}_1,\hat{\kappa}_2]$ is an overapproximation of $[\kappa_1,\kappa_2]$, i.e. $\hat{\kappa}_1 \leq \kappa_1$ and $\hat{\kappa_2} \geq \kappa_2$, with probability at least $1-\delta$.
\end{lemma}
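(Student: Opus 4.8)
The plan is to reduce the probabilistic statement to a purely deterministic, pointwise bracketing argument by conditioning on the high-probability event supplied by Lemma~\ref{lemma.uncertainty_set_gpr}. Concretely, I would first introduce the event
$$E := \{ |\phi(x) - \mu_{\mathcal{D}}(x)| \leq \beta_{\mathcal{D}}\sigma_{\mathcal{D}}(x) \ \forall x \in \R \},$$
which by that lemma satisfies $\Pb[E] \geq 1-\delta$. Everything that follows is carried out on $E$, so it suffices to establish the two inequalities $\hat{\kappa}_1 \leq \kappa_1$ and $\hat{\kappa}_2 \geq \kappa_2$ deterministically; the probabilistic conclusion then follows at once, since these inequalities hold whenever $E$ occurs. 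Here $\kappa_1,\kappa_2$ must be read as the \emph{tightest} sector bounds of $\phi$ on $[a,b]$, i.e. $\kappa_1 = \min_{x}\, x\phi(x)/x^2$ and $\kappa_2 = \max_{x}\, x\phi(x)/x^2$ over $[a,b]\setminus\{0\}$, as this is exactly what makes the over-approximation $[\hat\kappa_1,\hat\kappa_2]\supseteq[\kappa_1,\kappa_2]$ meaningful.

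On $E$, fix any $x \in [a,b]\setminus\{0\}$. The bound from $E$ says precisely that $\phi(x)$ lies in the interval with endpoints $\mu_{\mathcal{D}}(x) \pm \beta_{\mathcal{D}}\sigma_{\mathcal{D}}(x)$. Multiplying by $x$ and tracking its sign, the product $x\phi(x)$ is then bracketed by the smaller and the larger of $x(\mu_{\mathcal{D}}(x) - \beta_{\mathcal{D}}\sigma_{\mathcal{D}}(x))$ and $x(\mu_{\mathcal{D}}(x) + \beta_{\mathcal{D}}\sigma_{\mathcal{D}}(x))$ — this is exactly the role of the inner $\min$ and $\max$ in the definitions of $\hat{\kappa}_1,\hat{\kappa}_2$, which must accommodate both signs of $x$. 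Dividing by $x^2 > 0$ preserves the inequalities, yielding for every such $x$
$$\frac{\min\{\cdots\}}{x^2} \leq \frac{x\phi(x)}{x^2} \leq \frac{\max\{\cdots\}}{x^2}.$$

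Taking the infimum of the left-hand member and the supremum of the right-hand member over $x \in [a,b]\setminus\{0\}$, the outer extrema in the definitions give $\hat{\kappa}_1 \leq x\phi(x)/x^2 \leq \hat{\kappa}_2$ for every admissible $x$; minimizing and maximizing the true ratio over $x$ then yields $\hat{\kappa}_1 \leq \kappa_1$ and $\hat{\kappa}_2 \geq \kappa_2$, the claimed over-approximation. I expect the only genuinely delicate point to be the behavior near $x = 0$: the ratio $x\phi(x)/x^2 = \phi(x)/x$ and its bracketing bands are singular at the origin, so one must argue that the relevant extrema over the punctured interval are well-defined and that the bands stay bounded as $x \to 0$ (which is plausible under Assumption~\ref{assumption.rkhs} together with $\phi(0)=0$ and continuity of $\mu_{\mathcal{D}},\sigma_{\mathcal{D}}$, whence the $\phi/x$-type ratios admit finite limits). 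The sign bookkeeping in the multiplication by $x$ is the other place where care is required, but it is routine once the $\min/\max$ structure is in place.
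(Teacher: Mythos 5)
Your proof is correct and is essentially the paper's own argument: the paper gives no explicit proof (it only remarks that Lemma~\ref{lemma.uncertainty_set_gpr} ``leads immediately'' to this result), and your conditioning on the high-probability event $E$ followed by the pointwise bracketing of $x\phi(x)$, division by $x^2$, and passage to extrema is precisely that immediate deduction. Your clarification that $\kappa_1,\kappa_2$ must be read as the \emph{tightest} sector bounds of $\phi$ on $[a,b]$ is indeed the interpretation required for the claim to be true (and matches the paper's usage in Section~\ref{sec.example}), so it is a helpful sharpening rather than a deviation.
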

This approach can be made rigorously computational for Lipschitz continuous functions by evaluating the uncertainty set from Lemma \ref{lemma.uncertainty_set_gpr} on a fine grid and adapting Lemma \ref{lemma.sector_bounds_from_gpr} correspondingly.
\subsection{Robust controller synthesis} \label{sec.robust_synthesis}
We now tackle the robust controller synthesis problem with the learned uncertainty component using
the versatile approach from \cite{veenmanscherer_iqc_synthesis} for the case of static IQC multipliers. 
Consider without loss of generality the following minimal realization of \eqref{eq.generalized_plant}
(with possible performance weights already included in the generalized plant)
\begin{equation}
\begin{pmatrix} \label{eq.generalized_plant_realization}
\dot{x} \\
\hline
q \\
z \\
y
\end{pmatrix}
=
\begin{pmatrix}
	\begin{array}{c | c  c  c}
	A 		& B_p 		& B_w 		& B_u \\
	\hline
	C_q 	& D_{qp} 	& D_{qw}	& D_{qu} \\
	C_z 	& D_{zp}	& D_{zw}	& D_{zu} \\
	C_y	& D_{yp}	& D_{yw}	& 0
	\end{array}
\end{pmatrix}
\begin{pmatrix}
x \\
p \\
w \\
u
\end{pmatrix}.
\end{equation}
Canceling the uncertainty channel $p \rightarrow q$ results in the system $G_0$.
A standard $H_\infty$ controller synthesis leads to an intial LTI controller $K_0$
with nominal performance level $\gamma_0$.
Connecting $K_0$ to $G$ on the control channel $u \rightarrow y$ results in 
\begin{equation} \label{eq.closed_loop}
 \mathcal{G}_0=G \starprod K_0=
\begin{bmatrix}
\begin{array}{c|c c}
\mathcal{A} 	& \mathcal{B}_p 	& \mathcal{B}_w \\
\hline
\mathcal{C}_q 	& \mathcal{D}_{qp}	& \mathcal{D}_{qw} \\
\mathcal{C}_z 	& \mathcal{D}_{zp} 	& \mathcal{D}_{zw}
\end{array}
\end{bmatrix}.
\end{equation}
The corresponding realization can be derived by elementary manipulations and is omitted due to space constraints. 
Assume now that $\Deltaset$ fulfills an IQC with multipliers $\Pi=P \in \mathcal{P}$, where $\mathcal{P}$ has an LMI description. 
This leads to an initial analysis LMI \eqref{eq.iqc_lmi}
\begin{figure*}[h!]
{\scriptsize
\begin{equation} \label{eq.iqc_lmi}
\begin{pmatrix}
\mathcal{X}\mathcal{A} + \mathcal{A}^\top \mathcal{X} & \mathcal{X} \mathcal{B}_p & \mathcal{X} \mathcal{B}_w & \mathcal{C}_z^\top \\
\mathcal{B}_p^\top \mathcal{X} & \Zero & \Zero & \mathcal{D}_{zp}^\top \\
\mathcal{B}_w^\top \mathcal{X} & \Zero & -\gamma \Ident & \mathcal{D}_{zw}^\top\\
\mathcal{C}_z & \mathcal{D}_{zp} & \mathcal{D}_{zw} & -\gamma \Ident
\end{pmatrix}
+
\begin{pmatrix} \ast \end{pmatrix}^\top
P
\begin{pmatrix}
\mathcal{C}_q & \mathcal{D}_{qp} & \mathcal{D}_{qw} & \Zero \\
\Zero & \Ident & \Zero & \Zero
\end{pmatrix}
\prec \Zero
\end{equation}
\begin{equation} \label{eq.iqc_lmi_modified} 
\begin{pmatrix}
\mathcal{X}\mathcal{A} + \mathcal{A}^\top \mathcal{X} & \mathcal{X} \mathcal{B}_p \Psi_2^{-1} & \mathcal{X} \mathcal{B}_w & \mathcal{C}_z^\top \\
(\Psi_2^{-1})^\top \mathcal{B}_p^\top \mathcal{X} & \Zero & \Zero & (\Psi_2^{-1})^\top \mathcal{D}_{zp}^\top \\
\mathcal{B}_w^\top \mathcal{X} & \Zero & -\gamma \Ident & \mathcal{D}_{zw}^\top\\
\mathcal{C}_z & \mathcal{D}_{zp} \Psi_2^{-1} & \mathcal{D}_{zw} & -\gamma \Ident
\end{pmatrix}
+
\begin{pmatrix} \ast \end{pmatrix}^\top
\hat{P}
\begin{pmatrix}
\Psi_1 \mathcal{C}_q & (\Psi_1 \mathcal{D}_{qp} + \Psi_3)\Psi_2^{-1} & \Psi_1 \mathcal{D}_{qw} & \Zero \\
\Zero & \Ident & \Zero & \Zero
\end{pmatrix}
\prec \Zero
\end{equation}
}
\end{figure*}
with decision variables $\gamma > 0$, $\mathcal{X}=\mathcal{X}^\top$ and the LMI variable arising from $\mathcal{P}$.
Minimizing $\gamma$ leads to robust performance level $\tilde{\gamma}_0$ and corresponding multiplier $P$.
Then consider the following factorization of the IQC multiplier,
\begin{equation} \label{eq.static_factorization}
P = 
\begin{pmatrix} \ast \end{pmatrix}^\top
\hat{P}
\begin{pmatrix}
	\Psi_1 & \Psi_3 \\
	\Zero_{q \times p} & \Psi_2
\end{pmatrix}
\end{equation}
with $\Psi_1 \in \R^{{n_q} \times {n_q}}$, $\Psi_3 \in \R^{{n_q} \times {n_p}}$, $\Psi_2 \in \R^{{n_p} \times {n_p}}$,
$\Psi_2$ invertible and where $\hat{P}$ has the form
\begin{equation}
\begin{pmatrix}
\Zero & \Ident \\
\Ident & \Zero
\end{pmatrix}
\text{ or }
\begin{pmatrix}
\Ident & \Zero\\
\Zero & - \Ident
\end{pmatrix}.
\end{equation}
Note that this factorization is possible for all static multiplier classes of interest.
Applying the factorization in \eqref{eq.iqc_lmi}, together with elementary manipulations and a congruence transformation
with $\diag(\Ident, \Psi_2^{-1}, \Ident, \Ident)$,  results in \eqref{eq.iqc_lmi_modified}.
We recognize that \eqref{eq.iqc_lmi_modified} is the initial analysis LMI \eqref{eq.iqc_lmi} for the plant $G_1 \starprod K_0$,
where
\begin{equation}
G_1 = 
\begin{bmatrix}
	\begin{array}{c | c  c  c}
	A 		& B_p \Psi_2^{-1}		& B_w 		& B_u \\
	\hline
	\Psi_1 C_q 	& \Psi_1D_{qp}\Psi_2^{-1} + \Psi_3\Psi_2^{-1}	& \Psi_1 D_{qw}	& \Psi_1 D_{qu} \\
	C_z 	& D_{zp}\Psi_2^{-1}	& D_{zw}	& D_{zu} \\
	C_y	& D_{yp}\Psi_2^{-1}	& D_{yw}	& D_{yu}
	\end{array}
\end{bmatrix}
\end{equation}
with new performance channels $w_1 \rightarrow z_1$ and $w \rightarrow z$.
We can now run a standard controller synthesis on the generalized plant $G_1$ with the quadratic performance criterion
\begin{equation} \label{eq.qp}
\left\langle \begin{pmatrix} z_1 \\ w_1 \end{pmatrix}, \hat{P} \begin{pmatrix} z_1 \\ w_1 \end{pmatrix} \right\rangle
+ \frac{1}{\gamma} \| w \|^2 - \gamma \| z \|^2 \leq - \epsilon ( \| w_1 \|^2 + \| w \|^2),
\end{equation}
e.g. along the lines described in \cite{schererweiland_lmi}. Minimization of $\gamma$ leads to the robustified controller $K_1$ and performance level $\gamma_1$. 
These steps can now be iterated until no substantial improvement in $\gamma$ is obtained.

Returning to the concrete setting of sector bounded nonlinearities, suppose that the method from Section \ref{sec.learning_static_nl} and \ref{sec.sector_bounds_from_gpr} resulted in sector bound estimates $[\hat{\kappa}_1^{(i)}, \hat{\kappa}_2^{(i)}]$, $i\in[n_p]$. To proceed with the synthesis, we need
\begin{assumption} \label{assumption.sc_sign}
For all $i\in [n_p]$ we have $\hat{\kappa}_1^{(i)} \leq 0 \leq \hat{\kappa}_2^{(i)}$.
\end{assumption}
We can now express the estimated sector bounds with full block multipliers.
\begin{lemma} \label{lemma.fb_multipliers}
Let $\delta \in (0,1)$ and run the learning method in Section \ref{sec.learning_static_nl} and \ref{sec.sector_bounds_from_gpr} on each nonlinearity using independent data sets and replacing $\delta$ with $\delta/{n_p}$. Define
\begin{equation} \label{eq.fb_multipliers}
\mathcal{P}_{\text{fb}} 
=
\left\{ P \mid 
	(\cdot)^\top P \begin{pmatrix} \Ident \\ \Theta(\theta) \end{pmatrix} \succeq \Zero,\:
	(\cdot)^\top P \begin{pmatrix} \Zero \\ \Ident \end{pmatrix} \preceq \Zero, \:
	\theta \in \hat{\mathcal{K}}
\right\},
\end{equation}
where $\hat{\mathcal{K}}=\left\{ (\theta_1,\ldots,\theta_{n_p}) \mid \theta_i \in \{ \hat{\kappa}_1^{(i)}, \hat{\kappa}_2^{(i)}\},\: i \in [n_p] \right\}$ and for brevity $\Theta(\theta)=\diag(\theta)$.
Then under Assumption \ref{assumption.noise}, \ref{assumption.rkhs}, \ref{assumption.sc_sign}, with probability at least $1-\delta$, \eqref{eq.iqc_def} holds for all $\Pi \in \mathcal{P}_{\text{fb}}$.
\end{lemma}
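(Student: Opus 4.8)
The plan is to split the claim into a probabilistic part and a deterministic, purely algebraic part. Since every $\Pi = P \in \mathcal{P}_{\text{fb}}$ is a constant (static) multiplier, $\Pi$ acts on a signal by pointwise multiplication and the inner product in \eqref{eq.iqc_def} reduces to the time integral $\int_0^\infty \begin{pmatrix} q(t) \\ \phi(q(t)) \end{pmatrix}^\top P \begin{pmatrix} q(t) \\ \phi(q(t)) \end{pmatrix}\,dt$. Hence it suffices to prove the \emph{pointwise} bound $\begin{pmatrix} \xi \\ \phi(\xi) \end{pmatrix}^\top P \begin{pmatrix} \xi \\ \phi(\xi) \end{pmatrix} \geq 0$ for every $\xi \in \R^{n_q}$ and every $P \in \mathcal{P}_{\text{fb}}$, and then integrate. (Because each $\phi_i$ is sector bounded with $\phi_i(0)=0$, one has $\phi(q)\in\Lp_2$ whenever $q\in\Lp_2$, so the inner product is well defined.)

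For the probabilistic part I would apply Lemma \ref{lemma.sector_bounds_from_gpr} to each scalar nonlinearity $\phi_i$ separately. Since the data sets are independent and the noise of each satisfies Assumption \ref{assumption.noise}, and Assumption \ref{assumption.rkhs} holds for each $\phi_i$, running the procedure with confidence parameter $\delta/n_p$ yields for every $i\in[n_p]$ the event $E_i = \{\hat{\kappa}_1^{(i)} \le \kappa_1^{(i)} \text{ and } \hat{\kappa}_2^{(i)} \ge \kappa_2^{(i)}\}$ with $\Pb[E_i] \ge 1-\delta/n_p$. A union bound over the complements then gives $\Pb\big[\bigcap_{i\in[n_p]} E_i\big] \ge 1-\delta$. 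On this event each estimated interval over-approximates the true sector, so every $\phi_i$ lies in $[\hat{\kappa}_1^{(i)}, \hat{\kappa}_2^{(i)}]$, i.e. $\hat{\kappa}_1^{(i)}\xi_i^2 \le \xi_i\phi_i(\xi_i) \le \hat{\kappa}_2^{(i)}\xi_i^2$ for all $\xi_i\in\R$.

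The deterministic part is a full-block S-procedure argument carried out on $\bigcap_i E_i$. Fix $\xi\in\R^{n_q}$ and set $\theta_i := \phi_i(\xi_i)/\xi_i$ when $\xi_i\neq 0$ and $\theta_i := 0$ otherwise; the sector inequality gives $\theta_i\in[\hat{\kappa}_1^{(i)},\hat{\kappa}_2^{(i)}]$, where Assumption \ref{assumption.sc_sign} together with $\phi_i(0)=0$ makes $0$ an admissible value in the vanishing case. With $\Theta=\Theta(\theta)=\diag(\theta)$ we then have $\phi(\xi)=\Theta\xi$, hence $\begin{pmatrix} \xi \\ \phi(\xi) \end{pmatrix}=\begin{pmatrix} \Ident \\ \Theta \end{pmatrix}\xi$ and therefore $\begin{pmatrix} \xi \\ \phi(\xi) \end{pmatrix}^\top P \begin{pmatrix} \xi \\ \phi(\xi) \end{pmatrix}=\xi^\top F(\theta)\xi$ with $F(\theta):=(\cdot)^\top P\begin{pmatrix}\Ident\\\Theta(\theta)\end{pmatrix}$. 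It thus remains to show $F(\theta)\succeq 0$ for every $\theta$ in the box $\hat{\mathcal{B}}:=\prod_{i}[\hat{\kappa}_1^{(i)},\hat{\kappa}_2^{(i)}]$, whose vertex set is exactly $\hat{\mathcal{K}}$. Partitioning $P=\begin{pmatrix}P_{11}&P_{12}\\P_{12}^\top&P_{22}\end{pmatrix}$ conformably with $(q,p)$, the second defining inequality of $\mathcal{P}_{\text{fb}}$ is precisely $P_{22}\preceq 0$, while the first gives $F(\theta)\succeq 0$ at all vertices $\theta\in\hat{\mathcal{K}}$.

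The crucial step, and the place where the structure of $\mathcal{P}_{\text{fb}}$ is used, is extending vertex feasibility to the whole box. For fixed $\xi$ the scalar map $\theta\mapsto \xi^\top F(\theta)\xi = \xi^\top P_{11}\xi + 2\,\xi^\top P_{12}\Theta\xi + \theta^\top\big(\diag(\xi)\,P_{22}\,\diag(\xi)\big)\theta$ is affine plus a quadratic form whose matrix $\diag(\xi)\,P_{22}\,\diag(\xi)$ is negative semidefinite because $P_{22}\preceq 0$; it is therefore \emph{concave} in $\theta$. A concave function on a box attains its minimum at a vertex, so $\xi^\top F(\theta)\xi \ge \min_{\theta'\in\hat{\mathcal{K}}}\xi^\top F(\theta')\xi \ge 0$ for all $\theta\in\hat{\mathcal{B}}$ and all $\xi$, i.e. $F(\theta)\succeq 0$ throughout $\hat{\mathcal{B}}$. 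Consequently the pointwise quadratic form is nonnegative, and integrating over $t$ establishes \eqref{eq.iqc_def} for every $P\in\mathcal{P}_{\text{fb}}$ on the event $\bigcap_i E_i$, which has probability at least $1-\delta$. I expect the main obstacle to be exactly this concavity/vertex reduction: one must recognize that the negative semidefiniteness of the lower-right block, enforced by the second inequality defining $\mathcal{P}_{\text{fb}}$, is precisely what converts the infinitely many box constraints into the finitely many vertex constraints, and one must dispatch the $\xi_i=0$ case cleanly via $\phi_i(0)=0$ and Assumption \ref{assumption.sc_sign}.
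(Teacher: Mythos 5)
Your proof is correct, and its probabilistic half coincides exactly with the paper's: apply Lemma \ref{lemma.sector_bounds_from_gpr} to each of the $n_p$ nonlinearities at confidence level $\delta/n_p$ and take a union bound over the complements. Where you diverge is the deterministic half. The paper disposes of it in one stroke by citing Lemma~4.1 of \cite{veenmanscherer_iqc_synthesis}, which is precisely the statement that a diagonal nonlinearity lying in the (estimated) sectors satisfies the IQC for every multiplier in the full-block class \eqref{eq.fb_multipliers}; you instead reprove that result from scratch: reduce the static-multiplier IQC to a pointwise quadratic form (legitimate via Parseval, since $P$ is constant and $\phi(q)\in\Lp_2$ by the sector bound), write $\phi(\xi)=\Theta(\theta)\xi$ with $\theta$ in the box $\prod_i[\hat{\kappa}_1^{(i)},\hat{\kappa}_2^{(i)}]$, observe that $\theta\mapsto\xi^\top F(\theta)\xi$ is concave because the second condition in \eqref{eq.fb_multipliers} forces the lower-right block $P_{22}\preceq 0$, and conclude that nonnegativity at the vertices $\hat{\mathcal{K}}$ (the first condition) propagates to the whole box, since a concave function on a box attains its minimum at a vertex. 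This is, in substance, the argument behind the cited lemma, so the mathematical content agrees; what your version buys is self-containedness and transparency: it shows exactly which defining inequality of $\mathcal{P}_{\text{fb}}$ does what, and it makes explicit two points the paper leaves implicit, namely that Assumption \ref{assumption.sc_sign} is what makes $\theta_i=0$ admissible when $\xi_i=0$, and that $\phi_i(0)=0$ is needed there (consistent with the paper's remark in Section \ref{sec.learning_static_nl} that this is a requirement for sector boundedness). What the citation buys the paper is brevity and a statement already formulated to interface with the synthesis machinery of \cite{veenmanscherer_iqc_synthesis}, including its extension to dynamic multipliers, which Theorem \ref{thm.controller_guarantees} then reuses.
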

\begin{proof}
Follows immediately from \cite[Lemma~4.1]{veenmanscherer_iqc_synthesis}, Lemma \ref{lemma.sector_bounds_from_gpr} and the union bound.
\end{proof}
Since the multipliers from \eqref{eq.fb_multipliers} can be factorized according to \eqref{eq.static_factorization},
the synthesis-analysis procedure described above can be directly used.

\noindent
\emph{Remarks.} We would like to stress that the approach in \cite{veenmanscherer_iqc_synthesis} is much more general than the simplified setting used here for illustrative purposes. In particular, dynamic IQC multipliers can be used (involving non-trivial factorization results) which allow significantly more precise uncertainty descriptions leading to reduced conservatism. This is especially relevant for learning-based approaches, since the later might provide considerable additional information on the uncertainty. 
Furthermore, warm-start strategies are proposed in \cite{veenmanscherer_iqc_synthesis} in order to speed up the robust controller synthesis procedure. Finally, an approach for IQC-based robust synthesis in the context of gain scheduling is proposed in \cite{veenmanscherer_synthesis_robust_gainscheduling}. The extension of our approach to this setting is left for future work.

\subsection{From statistical to control-theoretic guarantees}
We can now give overall guarantees on the resulting controller.
\begin{theorem} \label{thm.controller_guarantees}
Let $\delta\in(0,1)$ and suppose that the learning procedure as outlined above is run with $\delta$ replaced by $\delta/n_p$ and $\mathcal{P}_{\text{fb}}$ is built according to \eqref{eq.fb_multipliers}. Under Assumption \ref{assumption.noise} to \ref{assumption.sc_sign}, if the controller synthesis algorithm described in Section \ref{sec.robust_synthesis} returns a controller $K$ after $M\geq 1$ iterations with robust performance level $\tilde{\gamma}_{M+1}$, then with probability at least $1-\delta$, $K$ will stabilize the true system and achieve robust performance level $\tilde{\gamma}_{M+1}$.
\end{theorem}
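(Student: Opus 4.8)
The plan is to separate the argument into a purely statistical part, which carries all the randomness, and a purely deterministic control-theoretic part, and to glue them together by conditioning on a single high-probability event. First I would introduce the event
\begin{equation*}
E := \left\{ \hat{\kappa}_1^{(i)} \leq \kappa_1^{(i)} \text{ and } \hat{\kappa}_2^{(i)} \geq \kappa_2^{(i)} \text{ for all } i \in [n_p] \right\},
\end{equation*}
i.e. the event that the estimated sector bounds overapproximate the true sectors simultaneously for every channel. Since the learning step is run on each nonlinearity with confidence parameter $\delta/n_p$ using independent data sets, Lemma \ref{lemma.sector_bounds_from_gpr} together with the union bound (exactly as in the proof of Lemma \ref{lemma.fb_multipliers}) yields $\Pb[E] \geq 1-\delta$. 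Crucially, all randomness enters the construction only through the data set $\mathcal{D}$, hence only through the realized bounds $\hat{\kappa}^{(i)}_{1,2}$; the set $\mathcal{P}_{\text{fb}}$, the synthesized controller $K$ and the returned level $\tilde{\gamma}_{M+1}$ are all deterministic functions of these bounds.

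Next I would argue that on the event $E$ the control-theoretic conclusion holds deterministically. On $E$ each $\phi_i$ lies in the estimated sector $[\hat{\kappa}_1^{(i)}, \hat{\kappa}_2^{(i)}]$, so by Lemma \ref{lemma.fb_multipliers} the true uncertainty $\phi$ satisfies the IQC \eqref{eq.iqc_def} for every $\Pi \in \mathcal{P}_{\text{fb}}$. The synthesis--analysis iteration of Section \ref{sec.robust_synthesis} terminates, after $M$ iterations, by certifying the analysis LMI \eqref{eq.iqc_lmi} for the closed loop $\mathcal{G} = G \starprod K$ with some multiplier $P \in \mathcal{P}_{\text{fb}}$ and performance value $\tilde{\gamma}_{M+1}$; here I use that the full-block multipliers in \eqref{eq.fb_multipliers} admit the factorization \eqref{eq.static_factorization}, so that the procedure indeed applies and its final analysis step produces such a certificate. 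Invoking the IQC stability theorem of \cite{megretskirantzer_iqc} in the robust-performance form of \cite{veenmanetal_iqc_review}, feasibility of this LMI with multiplier $P$ guarantees internal stability and the $H_\infty$-type performance bound $\tilde{\gamma}_{M+1}$ for the interconnection of $\mathcal{G}$ with any uncertainty obeying the IQC defined by $P$.

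The two parts combine because the IQC guarantee is uniform over the whole uncertainty class certified by $P$: on $E$ the true $\phi$ is one such admissible uncertainty, hence the (data-dependent) controller $K$ stabilizes the true plant \eqref{eq.true_sys} and achieves level $\tilde{\gamma}_{M+1}$, and this happens on an event of probability at least $1-\delta$. The main subtlety to handle carefully is precisely this decoupling: although $K$ and $\tilde{\gamma}_{M+1}$ are random objects built from the noisy data, the robust-performance certificate is a worst-case statement over all uncertainties satisfying the IQC, so no further measurability or joint-distribution argument relating the learned controller to the true nonlinearity is needed --- one only needs the inclusion of $\phi$ in the certified class, which is exactly what $E$ provides. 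A secondary point worth verifying is that the iteration is well defined and that the final analysis LMI is the one certifying $\tilde{\gamma}_{M+1}$; this is inherited from the construction in Section \ref{sec.robust_synthesis} and \cite{veenmanscherer_iqc_synthesis} and requires no probabilistic input.
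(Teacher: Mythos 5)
Your proposal is correct and follows essentially the same route as the paper's (one-line) proof: the probabilistic content is exactly Lemma \ref{lemma.fb_multipliers} (sector overapproximation via Lemma \ref{lemma.sector_bounds_from_gpr} plus the union bound), and the deterministic content is the IQC analysis certificate produced by the synthesis--analysis iteration of \cite{veenmanscherer_iqc_synthesis}, combined by conditioning on the high-probability event. Your explicit remark that no joint measurability argument is needed --- because the IQC certificate is a worst-case guarantee over the certified class, so inclusion of the true nonlinearity suffices --- is a useful unpacking of what the paper leaves implicit in ``follows immediately.''
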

\begin{proof}
Follows immediate from Lemma \ref{lemma.fb_multipliers} and the results in \cite{veenmanscherer_iqc_synthesis}.
\end{proof}

Let us discuss this result. For the learning component, we assume that there exists a fixed (but unknown) ground truth, here the diagonal nonlinearity \eqref{eq.nl}. We derive from the learning component an uncertainty set, here described by the full block IQC multipliers \eqref{eq.fb_multipliers}, that contains the ground truth with given (high) probability $1-\delta$. This is combined with a robust method that comes with control-theoretic guarantees for this uncertainty set. Since the ground truth is contained in the uncertainty set with probability at least $1-\delta$, the guarantees hold with the same high probability. 

It is important to contrast this approach with randomized methods in robust control  \cite{tempocalafioredabbene_randomized}, like the scenario approach \cite{calafiorecampi_scenario_approach}. In these methods, high-probability guarantees are given with respect to the randomization in the algorithms, while in our problem the randomness comes only from the data.
Moreover, the results in this paper are also different in nature if compared to work done in a Bayesian framework. In particular, we do not rely on a prior distribution which might be misspecified. 
We argue that in the modern robust control setting, the frequentist perspective is the most natural statistical setting since we assume a fixed ground truth, described by an uncertainty class.
\section{EXAMPLE} \label{sec.example}
To demonstrate the advantage of the learning component in robust control, we now compare the controller synthesis using a-priori sector bounds and the bounds learned using GPR.
As a concrete example, we use the distillation column system from \cite[Section~8.2]{veenmanscherer_iqc_synthesis}, where we replace the original dead-zone nonlinearity with the unknown target function to be learned.
\subsubsection*{Setup}
The situation is depicted as a blockdiagram in Figure \ref{fig.distcol_blockdiagram}. The plant model
is given by
\begin{equation} \label{eq.distcol_plant}
G(s)=
\frac{1}{75s+1}
\begin{pmatrix}
87.8 & -86.4 \\
108.2 & -109.6
\end{pmatrix}
\end{equation}
and we use the performance weights
\begin{equation} \label{eq.weights}
W_e(s)=
\frac{s+0.1}{2s+10^{-5}}
\Ident_2
\qquad 
W_u(s)=
\frac{s+10}{s+100}
\Ident_2.
\end{equation}
The synthesis objective is therefore to track the reference signal $r$ at low frequencies
and penalizing control action at high frequencies.
\begin{figure}[h!]
\centering
\tikzstyle{block} = [draw, rectangle, 
    minimum height=3em, minimum width=3em]
\tikzstyle{sum} = [draw, circle, node distance=1cm]
\tikzstyle{input} = []
\tikzstyle{branch} = [coordinate]
\tikzstyle{signal} = []
\tikzstyle{output} = [coordinate]
\tikzstyle{pinstyle} = [pin edge={to-,thin,black}]

\resizebox{\columnwidth}{!}{
\begin{tikzpicture}[auto, node distance=2cm,>=latex',font=\huge]
	\node [input, name=input] {$r$};
	\node [sum, right of=input] (sumleft) {};
	
	\node [branch, right of=sumleft] (bwe) {}; 
	\node [block, above of=bwe] (We) {$W_e$};
	\node [signal, above of=We] (we) {$w_e$};
	
	\node [block, right of=bwe] (K) {$K$};
	
	\node [branch, right of=K] (bwu) {}; 
	\node [block, above of=bwu] (Wu) {$W_u$};
	\node [signal, above of=Wu] (wu) {$w_u$};
	
	\node [branch, right of=bwu] (bdelta) {};
	\node [coordinate, right of=bdelta] (pdelta) {};
	\node [block, above of=pdelta] (delta) {$\Delta$};
	
	\node[sum, right of=pdelta, node distance=2cm] (sumright) {};
	
	\node [block, right of=sumright] (plant) {$G$};
	\node[coordinate, below of=plant] (pplant) {};
	
	\node [branch, right of=plant] (by) {};
	\node [signal, right of=by] (y) {$y$};
	
	\draw [draw,->] (input)  -- node[pos=0.99] {$+$}(sumleft);
	\draw  (sumleft) -- node[pos=0.5] {$e$} (bwe);
	\draw [->] (bwe) -- (We);
	\draw [->] (We) -- (we);
	\draw [->] (bwe) -- (K);
	
	\draw (K) -- node[pos=0.5] {$u$} (bwu);
	\draw [->] (bwu) -- (Wu);
	\draw [->] (Wu) -- (wu);
	
	\draw (bwu) -- (bdelta) -- (pdelta);
	\draw [->] (bdelta) |- node[near end] {$q$} (delta);
	\draw [->] (pdelta) -- node[pos=0.99,auto=right] {$+$} (sumright);
	\draw [->] (delta) -| node[near start] {$p$} node[pos=0.99] {$-$} (sumright);
	
	\draw [->] (sumright) -- (plant);
	
	\draw (plant) -- (by);
	\draw [->] (by) -- (y);
	\draw (by) |- (pplant);
	\draw [->] (pplant) -| node[pos=0.99,auto=right] {$-$}(sumleft);
\end{tikzpicture} }
\caption{Block diagram of the example. Here $\Delta$ is a repeated static nonlinearity.
}
\label{fig.distcol_blockdiagram}
\end{figure}
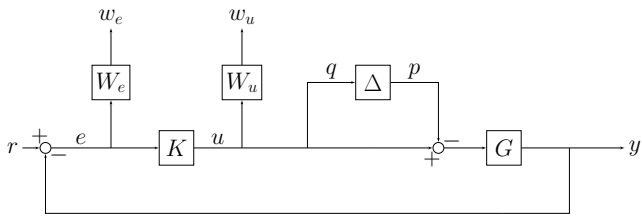
For illustrative purposes we use the same unknown nonlinearity in both uncertainty channels, i.e., $p_i=\phi(q_i)$, $i=1,2$.
For the experimental validation we need access to the ground truth, hence we use a function 
$\phi \in H_{k_0}$, where $k_0$ is given by \eqref{eq.kernel0} and for $k$ we choose the popular
Squared Exponential (SE) kernel $k(x,x^\prime)=\sigma_k^2\exp\left(-\frac{(x-x^\prime)^2}{2\ell^2} \right)$ with lengthscale $\ell=0.5$ and variance $\sigma_k^2=0.5$. The particular function $\phi$ is shown in Figure \ref{fig.sector_bounds} and has RKHS norm $ 2.6053$. Since we are interested in stabilization, we restrict attention to inputs to $\phi$ from $[-1,1]$ and $\phi$ fulfills the sector condition $[ -0.0572,0.3575]$ on this interval. 
Note that it is possible to ensure containment of the relevant signals in such an interval with techniques from robust control, but due to space constraints we omit the details.

Using standard manipulations, the robust synthesis problem with performance weights from \eqref{eq.weights} and diagonal static nonlinearities can be framed as an LFR. Furthermore, we use the full block multipliers from Lemma~\ref{lemma.fb_multipliers}. In the following experiments, we run the synthesis procedure outlined in Section \ref{sec.robust_synthesis} for 20 iterations.

Finally, to test the learning method we generate data sets $\mathcal{D}=((x_n,y_n))_{n\in[N]}$ by sampling $N=50$ inputs $x_n$ uniformely from $[-1,1]$, evaluate $\phi$ at these inputs and add independent $\Normal(0,0.05)$ noise to get $y_n$.

\subsubsection*{Improving performance with learning}
Assume we know a priori that $\phi$ belongs to the (rather wide) sector $[-0.9, 0.9]$. 
Running the robust controller synthesis leads to a controller with robust performance level $10.5575$.

This is now compared with the methodology described in Section \ref{sec.method}.
We use GPR with a zero prior mean function and the kernel $k_0$ as covariance function as well as the true noise level $0.05$.
We follow previous works, e.g. \cite{maddalenaetal_deterministic_uncertainty_bounds}, and assume an increased RKHS norm bound $B=2\|\phi\|_{k_0}$ in Proposition \ref{prop.uncertainty_bound}. 
Setting $\delta=0.001$ and following the procedure outlined in Section \ref{sec.sector_bounds_from_gpr} leads to the sector bounds $[-0.2460,  0.5480]$. 
The situation is illustrated in Figure \ref{fig.sector_bounds}.
Applying the controller synthesis method with this sector bound leads to a controller that has robust performance level  $3.1581$ at least with probability $1-0.001$.

\subsubsection*{Data-performance tradeoff} 
We now investigate the potential performance improvements with increasing size of the data set.
For this we generate data sets with 100, 150, 200, 250, 300 data points and applied the learning method to these data sets, using the same procedures as in the last experiment. We repeated this 50 times. The resulting estimates of the sector bounds are shown in Figure \ref{fig.data_performance_tradeoff} (upper two plots). It is clearly visible that more data is helpful for the learning method, since the sector bounds become tighter.
Furthermore, running the robust controller synthesis on the estimated sector bounds (averaged over all 50 trials) leads to the results shown in Figure \ref{fig.data_performance_tradeoff} (lower plot). Clearly, increased data helps to improve the robust performance. Note that the robust stability and performance of the controller is guaranteed with user-specified probability, while profiting from the additional information contained in the data set.

\begin{figure}
\centering
\includegraphics[width=0.4\textwidth]{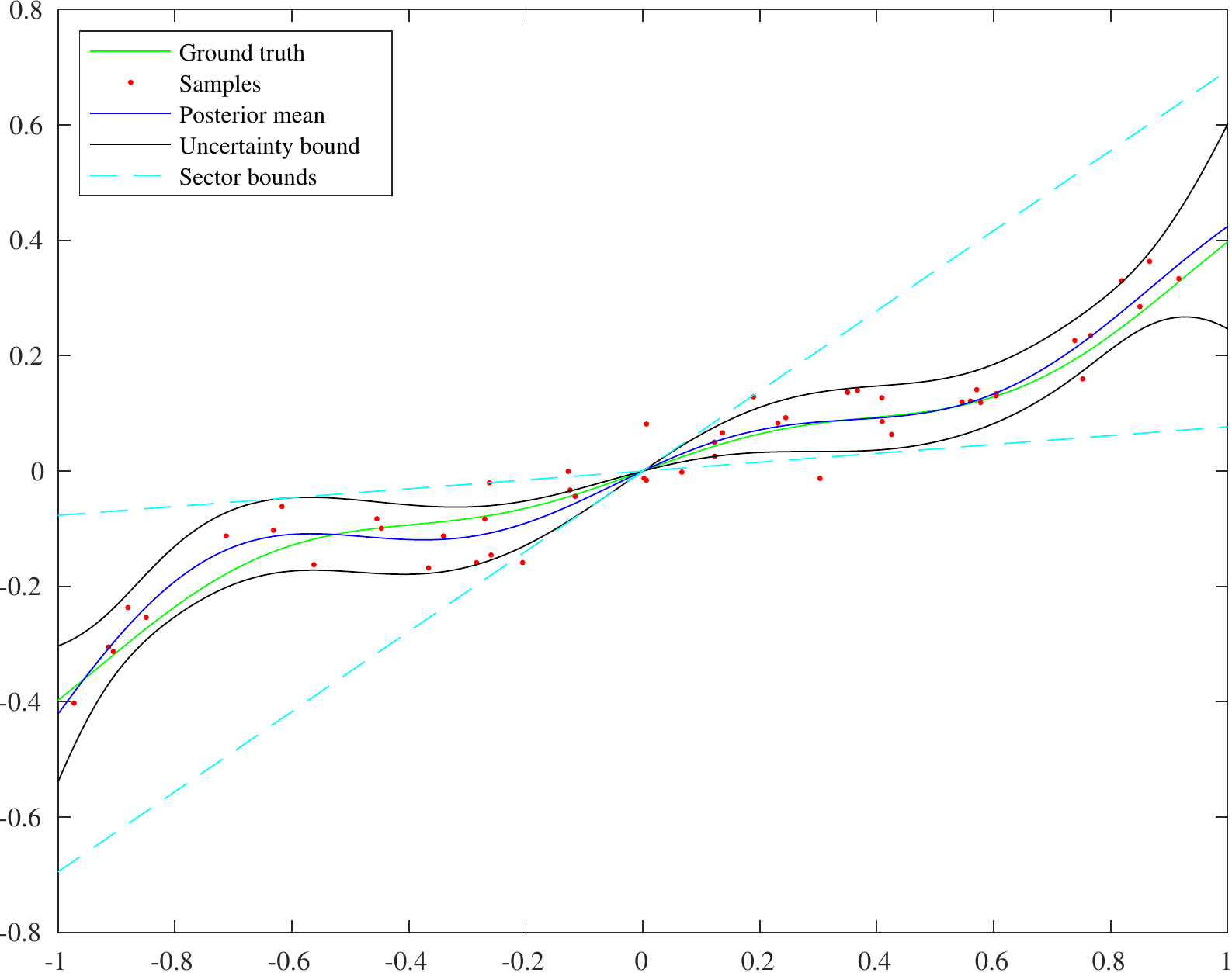}
\caption{The unknown nonlinearity is learned with GPR from samples, resulting in a high probability uncertainty set. This is used to derive sector bound estimates for the nonlinearity.}
\label{fig.sector_bounds}
\end{figure}
\begin{figure}
\centering
\includegraphics[width=0.4\textwidth]{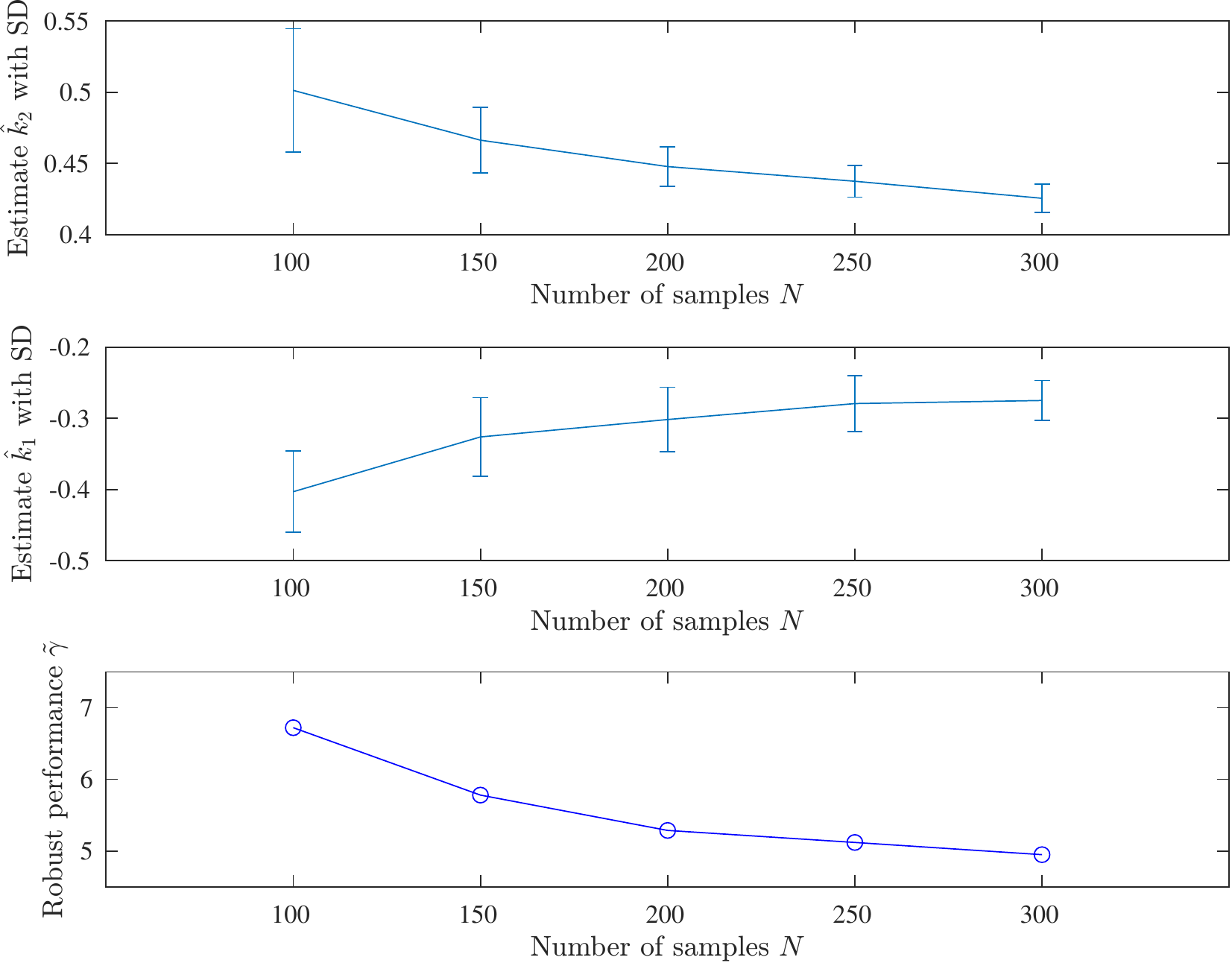}
\caption{Effect of increased data sets. In the upper two plots the estimated sector bounds for different data sizes are shown (averaged over 50 runs with corresponding empirical standard deviation), the lower plot shows the robust performance level of the resulting controllers.}
\label{fig.data_performance_tradeoff}
\end{figure}
\section{CONCLUSION} \label{sec.conclusion}
We presented a general framework integrating machine learning and modern robust control.
We showed how to utilize the flexible LFR and IQC frameworks to seamlessly include learning components in a robust control context. Furthermore, using GPR and recent statistical uncertainty results, we were able to give rigorous and practically meaningful statistical guarantees for the learning component that can be used in the IQC framework. In particular, we were able to use the learning components in a robust controller synthesis method and retain rigorous control-theoretic guarantees.
On a methodological level, we showed how to systematically include prior knowledge in the LFR setup and the learning component via customized kernels. Furthermore, we demonstrated how statistical guarantees without resorting to heuristics can be given for machine learning components and how to utilize these to arrive at combined statistical and control-theoretic guarantees.
A concrete example demonstrated the practical feasibility of the overall approach.

As already indicated, the general approach offers a lot of flexibility. Interesting directions for future work include the integration of several learning components, additional uncertainties without learning, and other synthesis settings like robust gain scheduling.
Ongoing work is concerned with the extension of results like Proposition \ref{prop.uncertainty_bound} to linear system identification with kernel methods as well as using different learning methods, e.g. to deal with non-stochastic, but bounded noise assumptions.

\section*{ACKNOWLEDGMENTS}
We thank Joost Veenman for providing his extensive code and simulation files from \cite{veenmanscherer_iqc_synthesis}, and Steve Heim and Alexander von Rohr for their feedback on a draft of this paper.

                                  

\bibliography{main}

\end{document}